\documentclass[10pt]{article}

\usepackage{amssymb,amsmath}
\newtheorem{thm}{Theorem}[section]

\newtheorem{lem}[thm]{Lemma}

\newenvironment{proof}{\begin{trivlist}
                       \item[]{\bf Proof.}
                       \hspace{0cm}}{\hfill $\Box$
                       \end{trivlist}}

 \newcommand{\thmref}[1]{Theorem~\ref{#1}}
 \newcommand{\lemref}[1]{Lemma~\ref{#1}}

\newcommand{\R}{{\mathbb R}}
\newcommand{\C}{{\mathbb C}}

\newcommand{\bee}{\begin{equation*}}
\newcommand{\eee}{\end{equation*}}
\newcommand{\be}{\begin{equation}}
\newcommand{\ee}{\end{equation}}
\newcommand{\pn}{\par\noindent}

\title{Uniqueness of the solution to inverse scattering problem with
scattering data at a fixed direction of the incident wave}
\author{A G Ramm\\
\small Department of Mathematics\\[-0.8ex]
\small Kansas State University, Manhattan, KS 66506-2602, USA\\[-0.8ex]
\small \texttt{ramm@math.ksu.edu}\\
}

\begin{document}

\date{}
\maketitle
\begin{abstract}
Let $q(x)$ be real-valued compactly supported 
sufficiently smooth function. It is proved that the scattering data 
$A(\beta,\alpha_0,k)$
$\forall \beta\in S^2$, $\forall k>0,$ determine $q$ uniquely.
Here $\alpha_0\in S^2$ is a fixed direction of the incident plane wave.
\end{abstract}
\pn{\\ MSC: 35P25, 35R30, 81Q05; PACS: 02.30.Jr; 02.30.Zz; 03.65.-w  \\
{\em Key words:} inverse scattering; non-overdetermined data; 
fixed direction of the incident wave scattering data. }

\section{Introduction}
The scattering solution $u(x,\alpha,k)$
solves the scattering problem: \be\label{e1}
[\nabla^2+k^2-q(x)]u=0\quad in\quad \R^3,\ee \be\label{e2}
u=e^{ik\alpha\cdot
x}+A(\beta,\alpha,k)\frac{e^{ikr}}{r}+o\left(\frac{1}{r}\right),\quad
r:=|x|\to \infty,\ \beta:=\frac{x}{r}. \ee 
Here $\alpha, \beta \in S^2$ are the unit vectors, $S^2$ is the unit 
sphere, the coefficient
$A(\beta,\alpha,k)$ is called the scattering amplitude, $q(x)$
is a real-valued compactly supported sufficiently smooth function.  
The inverse scattering problem
of interest is to determine $q(x)$ given the scattering
data $A(\beta, \alpha_0, k)$ $\forall \beta\in S^2, \, \forall k>0$. 
This problem is called {\it the inverse scattering problem with
fixed direction of the incident plane wave data.} 

The function
$A(\beta,\alpha_0,k)$ depends on one unit vector $\beta$
and on the scalar $k$, i.e., on three variables.
The potential $q(x)$ depends also on three variables $x\in \R^3$.
This inverse problem is, therefore, not over-determined
in the sense that the data and the unknown $q(x)$ are functions
of the same number of variables. 

{\it Historical remark.} In the beginning of the forties of the last
century physicists raised the the following question: is it possible to
recover the Hamiltonian of a quantum-mechanical system from the
observed quantities, such as $S$-matrix? In the non-relativistic
quantum mechanics  the simplest Hamiltonian ${\bf H}=-\nabla^2+q(x)$
can be uniquely determined
if one knows the potential $q(x)$. The $S$-matrix in this case is
in one-to-one correspondence with the scattering amplitude $A$:
$S=I-\frac k{2\pi i}A$, where $I$ is the identity operator in
$L^2(S^2)$,  $A$ is an integral operator in $L^2(S^2)$ with
the kernel $A(\beta, \alpha, k)$, and $k^2>0$ is energy.
Therefore, the question, raised by the
physicists, is reduced to an inverse scattering problem: can one determine
the potential $q(x)$ from the knowledge of the scattering amplitude.
The inverse scattering problem with fixed direction $\alpha_0$ of the 
incident plane wave scattering data  $A(\beta, \alpha_0, k)$,
known for all $\beta\in S^2$ and all $k>0$ has been open from
the forties of the last century. In this paper we prove uniqueness of
the solution to this inverse problem under the {\it Assumption A)}
formulated below. Altough there is a large literature on inverse
scattering (see, e.g., references in \cite{R470}, \cite{Ka}), the above 
problem was
not solved, and the references we give are only to the papers directly 
related to our presentation.   

Let $B_a$ be the ball centered at the
origin and of radius $a$, and $H_0^\ell(B_a)$ be the closure of
$C_0^\infty(B_a)$ in the norm of the Sobolev space $H^\ell(B_a)$ of
functions whose derivatives up to the order $\ell$ belong to
$L^2(B_a).$

{\it Assumption A): 

We assume that $q$ is compactly supported, i.e.,
$q(x)=0$ for $|x|>a$, where $a>0$ is an arbitrary large fixed
number; $q(x)$ is real-valued, i.e., $q=\overline{q}$; and $q(x)\in
H_0^\ell(B_a)$, $\ell>3$.}

It was proved in  \cite{R35}
(see also \cite{R486}, Chapter 6), 
that if $q=\overline{q}$ and $q\in L^2(B_a)$
is compactly supported, then the resolvent kernel $G(x,y,k)$ of the
Schr\"{o}dinger operator $-\nabla^2+q(x)-k^2$ is a meromorphic
function of $k$ on the whole complex plane $k$, analytic in Im$k\geq
0$, except, possibly, of a finitely many simple poles at the points
$ik_j$, $k_j>0$, $1\leq j\leq n$, where $-k_j^2$ are negative
eigenvalues of the selfadjoint operator $-\nabla^2+q(x)$ in
$L^2(\R^3)$. Consequently, the scattering amplitude
$A(\beta,\alpha,k)$, corresponding to the above $q$, is a restriction
to the positive semiaxis $k\in[0,\infty)$ of a meromorphic on the
whole complex $k$-plane function.

It was proved by
the author (\cite{R228}), that the {\it fixed-energy scattering data}
$A(\beta,\alpha):=A(\beta,\alpha,k_0)$, $k_0=const>0$, $\forall
\beta \in S_1^2,$ $\forall \alpha\in S_2^2$, determine real-valued
compactly supported $q\in L^2(B_a)$ uniquely. Here $S_j^2,$ $j=1,2$,
are arbitrary small open subsets of $S^2$ (solid angles).
No uniqueness results for the potentials which decay at a
power rate are known if the scattering data are known at a fixed
energy. If the potentials decay faster than exponentially as $|x|\to 
\infty$,
then a uniqueness result for this problem is obtained in \cite{RS}. 
If the potential decays at a power rate but the scattering 
data are known for all $k>0$, all $\alpha\in S^2$ and all $\beta\in S^2$
then a uniqueness results was obtained in \cite{Sa}.
 
In \cite{R425} (see also monograph \cite{R470}, Chapter 5, and 
\cite{R285}) an analytical 
formula is derived for the reconstruction of the potential 
$q$ from  exact fixed-energy scattering data, and from  noisy
fixed-energy scattering data, and stability estimates and error 
estimates for the reconstruction method are obtained. 
To the author's knowledge, these are
the only known until now  theoretical error estimates for the recovery 
of the potential from noisy fixed-energy scattering data in
the three-dimensional inverse scattering problem.

In \cite{R325} stability results are obtained for the inverse 
scattering problem for obstacles.

The scattering data $A(\beta,\alpha)$ depend on four variables (two
unit vectors), while the unknown $q(x)$ depends on three variables.
In this sense the inverse scattering problem, which consists of finding 
$q$ from the fixed-energy scattering data $A(\beta,\alpha)$, is
overdetermined.

The first uniqueness theorem for three-dimensional inverse scattering  
problem with non-overdetermined data was proved by the author 
in  \cite{R584}, where the scattering data were the backscattering 
data $A(-\beta, \beta, k)$ $\forall \beta\in S^2$ $\forall k>0$.
The goal of this paper is to prove  a uniqueness theorem
for the three-dimensional inverse scattering problem with
the scattering data  $A(\beta, \alpha_0, k)$ $\forall \beta\in S^2$ 
$\forall k>0$. These data are also non-overdetermined.
Our work is based on the method developed in \cite{R589}, but the
presentation is self-contained.

\begin{thm}\label{thm1}
If Assumption A) holds,  then the data
$A(\beta, \alpha_0,k)$ $\forall \beta\in S^2$, $\forall k>0$, 
and a fixed $\alpha_0\in S^2$, determine
$q$ uniquely.
\end{thm}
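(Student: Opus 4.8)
The plan is to argue by contradiction via the orthogonality/completeness scheme, adapted to the single fixed direction $\alpha_0$. Suppose two potentials $q_1,q_2$ satisfying Assumption A) generate the same data, $A_1(\beta,\alpha_0,k)=A_2(\beta,\alpha_0,k)$ for all $\beta\in S^2$ and all $k>0$, and set $p:=q_1-q_2$. Since the two scattered fields have identical far-field patterns, Rellich's lemma gives $u_1(x,\alpha_0,k)=u_2(x,\alpha_0,k)$ for $|x|\ge a$ and all $k>0$. Hence $w:=u_1-u_2$ vanishes, together with its normal derivative, on $\partial B_a$, i.e. $w(\cdot,k)\in H^2_0(B_a)$ for every $k$, and in $B_a$ it solves $(\nabla^2+k^2-q_1)w=p\,u_2(\cdot,\alpha_0,k)$.

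The first substantive step is to convert this into an orthogonality relation. Multiplying the equation for $w$ by an arbitrary solution $\psi$ of $(\nabla^2+k^2-q_1)\psi=0$ in $B_a$, integrating over $B_a$, and using the vanishing Cauchy data of $w$ to discard all boundary terms in Green's formula, I obtain
$$\int_{B_a}p(x)\,u_2(x,\alpha_0,k)\,\psi(x,k)\,dx=0\qquad(k>0).$$
This holds for every such $\psi$; the useful choices are the scattering solutions $\psi=u_1(\cdot,\theta,k)$, $\theta\in S^2$, and, more flexibly, the complex-geometric-optics solutions $\psi=e^{i\zeta\cdot x}(1+r_\zeta)$ with $\zeta\cdot\zeta=k^2$.

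Next I would exploit the high-energy behaviour of the scattering solutions, $u(\cdot,\alpha,k)=e^{ik\alpha\cdot x}+v(\cdot,\alpha,k)$ with $\|v(\cdot,\alpha,k)\|_{L^2(B_a)}\to0$ as $k\to\infty$ uniformly in the direction. Inserting $\psi=u_1(\cdot,\theta,k)$ and expanding the product, the leading term of the relation is $\widehat p\big(-k(\alpha_0+\theta)\big)$, where $\widehat p$ is the Fourier transform of $p$, and the remaining terms are products containing at least one factor $v$. As $(\theta,k)$ range over $S^2\times(0,\infty)$, the vector $k(\alpha_0+\theta)$ sweeps out the open half-space $\{\mu:\alpha_0\cdot\mu>0\}$. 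If the correction terms can be shown to vanish on this set, then $\widehat p\equiv0$ on an open subset of $\R^3$; since $p$ is compactly supported, $\widehat p$ is entire (Paley--Wiener), so it vanishes identically and $p=0$, proving the theorem.

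The main obstacle is precisely the control of these nonlinear, multiple-scattering corrections. A naive passage to $k\to\infty$ forces $\theta\to-\alpha_0$ and therefore only reaches the bounding hyperplane $\{\mu:\alpha_0\cdot\mu=0\}$, on which $\widehat p=0$ is recovered cleanly but which is far too small, since a hyperplane is not a uniqueness set for entire functions. To reach the full open half-space one cannot simply send $k\to\infty$; instead I would keep $k$ finite and use analyticity in $k$. By the meromorphic continuation of $A(\beta,\alpha_0,k)$ and of the scattering solutions established in \cite{R35}, the function $k\mapsto\int_{B_a}p\,u_2(\cdot,\alpha_0,k)\,u_1(\cdot,\theta,k)\,dx$ is analytic for $k>0$ and vanishes there for every $\theta$; matching its asymptotic expansion in powers of $1/k$ produces, coefficient by coefficient, linear constraints on the moments of $p$ against the geometric-optics amplitudes of the two scattering solutions. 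Assembling these constraints along the lines of \cite{R589} is expected to upgrade the hyperplane information to vanishing of $\widehat p$ on the open half-space, after which the Paley--Wiener argument of the previous paragraph closes the proof. This step---turning the fixed-direction, all-$k$ data into an open set on which $\widehat p$ vanishes, by taming the corrections through analyticity in $k$---is where the real work lies.
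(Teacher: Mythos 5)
Your setup is correct and in substance identical to the paper's: the Rellich-plus-Green argument yields exactly the orthogonality relation of Lemma 2.1 (equations \eqref{e14}, \eqref{e20}), with $\psi=u_1(\cdot,\theta,k)$ playing the role of $u_2(x,-\beta,k)$, and your diagnosis of the obstruction is exactly right --- the map $(\theta,k)\mapsto\mu=k(\alpha_0+\theta)$ covers the half-space $\{\mu:\alpha_0\cdot\mu>0\}$ but pins $k=|\mu|^2/(2\alpha_0\cdot\mu)$ at each Fourier point, so letting $k\to\infty$ only reaches the bounding hyperplane, which is not a uniqueness set. But at that point your proof stops. The final paragraph (``matching asymptotic expansions in powers of $1/k$,'' ``assembling these constraints \ldots is expected to upgrade the hyperplane information to the open half-space'') is not an argument: you never exhibit any mechanism that converts one exact relation per Fourier point, each contaminated by a correction that is \emph{not} small at the pinned value of $k$, into vanishing of $\widehat p$ on an open set, and you concede that this is ``where the real work lies.'' That missing step is the entire content of the theorem, so the proposal has a genuine gap.

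It is also worth noting that the paper's mechanism is quite different from the one you gesture at, and it never produces an open set on which $\widehat p$ vanishes. The paper complexifies the relation in $k$ (Lemma 2.2 supplies the meromorphic continuation), evaluating it at $(\kappa+i\eta)\beta$ with $\eta=\eta(\kappa)=a^{-1}\ln\kappa+O(1)$, and then plays two estimates against each other. First, Lemma 3.1: a Phragm\'en--Lindel\"of/Liouville argument applied to the entire function $z\mapsto\int_{-a}^a e^{iz\lambda}\hat p(\beta,\lambda)\,d\lambda$ (Radon transform of $p$) shows that $\eta(\kappa)$ can be chosen so that
$$\max_{\beta\in S^2}\bigl|\tilde p\bigl((\kappa+i\eta(\kappa))\beta\bigr)\bigr|=\mathcal{P}:=\sup_{s\in\R^3}|\tilde p(s)|,$$
i.e.\ the \emph{global} supremum of $|\tilde p|$ is attained on the complex sphere where the data relation lives. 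Second, the convolution term in the Fourier-transformed relation \eqref{e23} is at most $\frac{1}{(2\pi)^3}\nu(\kappa,\eta(\kappa))\,\mathcal{P}$ with $\nu\to0$: the growth $e^{a\eta}\sim\kappa$ caused by the complex shift is beaten by an $o(1/\kappa)$ decay of the relevant integral, which requires the smoothness $\ell>3$ and the Neumann-series ($\|T^2\|\to0$) argument of the Appendix. Together these force $\mathcal{P}\le\frac{1}{(2\pi)^3}\nu\mathcal{P}$, hence $p=0$. The reference \cite{R589} that you invoke for the hoped-for ``assembly'' is precisely this complexification method, not an asymptotic-expansion matching; without an analogue of Lemma 3.1 --- some device that places the maximum of $|\tilde p|$ where the data constrains it --- your half-space/Paley--Wiener strategy cannot be completed.
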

{\bf Remark 1.} The conclusion of \thmref{thm1} remains valid
if the data $A(\beta, \alpha_0,k)$ are known $\forall \beta\in S_1^2$
and $k\in(k_0,k_1),$ where $(k_0,k_1)\subset[0,\infty)$ is an
arbitrary small interval,  $k_1>k_0$, and $S_1^2$ is an arbitrary
small open subset of $S^2.$

In Section 2 we formulate some 
auxiliary results. 

In Section 3 proof of \thmref{thm1} is given.

In the Appendix a technical estimate is proved.

\section{Auxiliary results}
Let \be\label{e3} F(g):=\tilde{g}(\xi)=\int_{\R^3}g(x)e^{i\xi\cdot
x} dx,\quad g(x)=\frac{1}{(2\pi)^3}\int_{\R^3} e^{-i\xi\cdot
x}\tilde{g}(\xi) d\xi. \ee

If $f*g:=\int_{\R^3}f(x-y)g(y)dy,$ then \be\label{e4}
F(f*g)=\tilde{f}(\xi)\tilde{g}(\xi),\quad
F(f(x)g(x))=\frac{1}{(2\pi)^3}\tilde{f}*\tilde{g}. \ee If
\be\label{e5} G(x-y,k):=\frac{e^{ik[|x-y|-\beta\cdot(x-y)]}
}{4\pi|x-y|},\ee 
then 
\be\label{e6}
F(G(x,k))=\frac{1}{\xi^2-2k\beta\cdot \xi}, \qquad \xi^2:=\xi\cdot
\xi. 
\ee 
The scattering solution $u=u(x,\alpha,k)$ solves (uniquely)
the integral equation \be\label{e7} u(x,\alpha,k)=e^{ik\alpha\cdot
x}-\int_{B_a}g(x,y,k)q(y)u(y,\alpha,k)dy, \ee where \be\label{e8}
g(x,y,k):=\frac{e^{ik|x-y|}}{4\pi|x-y|}.\ee If \be\label{e9}
v=e^{-ik\alpha\cdot x}u(x,\alpha,k),\ee then \be\label{e10}
v=1-\int_{B_a}G(x-y,k)q(y)v(y,\alpha,k)dy,\ee where $G$ is defined
in \eqref{e5}.

Define $\epsilon$ by the formula \be\label{e11} v=1+\epsilon.\ee
Then \eqref{e10} can be rewritten as 
\be\label{e12}
\epsilon(x,\alpha,k)=-\int_{\R^3}G(x-y,k)q(y)dy- T\epsilon,
\ee
where 
$$T\epsilon:=\int_{B_a}G(x-y,k)q(y)\epsilon(y,\alpha,k)dy.$$
Fourier transform of \eqref{e12} yields (see \eqref{e4},\eqref{e6}):
\be\label{e13}
\tilde{\epsilon}(\xi,\alpha,k)=-\frac{\tilde{q}(\xi)}{\xi^2-2k\alpha\cdot
\xi}-\frac{1}{(2\pi)^3}\frac{1}{\xi^2-2k\alpha\cdot
\xi}\tilde{q}*\tilde{\epsilon}. \ee 
An essential ingredient of our
proof in Section 3 is the following lemma, proved by the author
 in  \cite{R470}, p.262, and in  \cite{R425}. For convenience of the 
reader
a short proof of this lemma is given in Appendix.

\begin{lem}\label{lem1}
If $A_j(\beta,\alpha,k)$ is the scattering amplitude corresponding
to potential $q_j$, $j=1,2$, then \be\label{e14}
-4\pi[A_1(\beta,\alpha,k)-A_2(\beta,\alpha,k)]=\int_{B_1}[q_1(x)-q_2(x)]
u_1(x,\alpha,k)u_2(x,-\beta,k)dx,
\ee where $u_j$ is the scattering solution corresponding to $q_j$.
\end{lem}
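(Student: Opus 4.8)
\section*{Proof proposal for \lemref{lem1}}

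The plan is to combine the integral representation of the scattering amplitude with Green's (Lagrange) identity. First I would record the standard far-field representation obtained by letting $r:=|x|\to\infty$ and $\beta=x/r$ in \eqref{e7} and comparing with \eqref{e2}: since $g(x,y,k)=\frac{e^{ik|x-y|}}{4\pi|x-y|}\sim\frac{e^{ikr}}{4\pi r}e^{-ik\beta\cdot y}$ as $r\to\infty$, one gets
\bee
A_j(\beta,\alpha,k)=-\frac{1}{4\pi}\int_{B_a}e^{-ik\beta\cdot x}q_j(x)u_j(x,\alpha,k)\,dx,\qquad j=1,2.
\eee
Set $w_1:=u_1(x,\alpha,k)$ and $w_2:=u_2(x,-\beta,k)$; by \eqref{e1} these solve $(\nabla^2+k^2)w_1=q_1w_1$ and $(\nabla^2+k^2)w_2=q_2w_2$. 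Forming the Lagrange identity $w_2\nabla^2w_1-w_1\nabla^2w_2=(q_1-q_2)w_1w_2$, integrating over the ball $\{|x|<R\}$ with $R>a$, and applying the divergence theorem yields
\bee
\int_{|x|=R}\bigl(w_2\,\partial_r w_1-w_1\,\partial_r w_2\bigr)\,dS=\int_{\R^3}[q_1(x)-q_2(x)]\,w_1w_2\,dx.
\eee
Because $q_1-q_2$ is supported in $B_a$, the right-hand side is already the integral appearing in \eqref{e14} and is independent of $R$; it therefore remains only to identify the limit of the surface integral as $R\to\infty$.

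The main work, and the step I expect to be the principal obstacle, is the asymptotic evaluation of that surface integral. I would substitute the far-field forms $w_1=e^{ik\alpha\cdot x}+A_1(\theta,\alpha,k)\frac{e^{ikR}}{R}+o(1/R)$ and $w_2=e^{-ik\beta\cdot x}+A_2(\theta,-\beta,k)\frac{e^{ikR}}{R}+o(1/R)$, with $\theta=x/R$, and expand the bilinear integrand into four pieces. The plane-wave/plane-wave piece vanishes identically, since both exponentials satisfy the Helmholtz equation and the corresponding volume integrand $v_2\nabla^2v_1-v_1\nabla^2v_2=-k^2v_1v_2+k^2v_1v_2$ is zero. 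The scattered/scattered piece vanishes in the limit because $\partial_r$ of an outgoing wave equals $ik$ times the wave to leading order, so the two terms cancel up to $o(1/R^2)$ over a surface of area $O(R^2)$. The two surviving cross pieces are evaluated by the stationary-phase formula on the sphere,
\bee
\int_{S^2}f(\theta)e^{i\lambda\nu\cdot\theta}\,d\theta=\frac{2\pi}{i\lambda}\bigl[f(\nu)e^{i\lambda}-f(-\nu)e^{-i\lambda}\bigr]+o(1/\lambda),\qquad \lambda=kR,
\eee
applied once with $\nu=-\beta$ and once with $\nu=\alpha$. The factors $1+\beta\cdot\theta$ and $(\alpha\cdot\theta)-1$ produced by $\partial_r$ vanish at exactly one of the two stationary points, so only a single contribution survives in each piece, giving $-4\pi A_1(\beta,\alpha,k)$ and $4\pi A_2(-\alpha,-\beta,k)$ respectively.

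Finally I would invoke reciprocity, $A_2(-\alpha,-\beta,k)=A_2(\beta,\alpha,k)$, which is itself provable by the same Lagrange-identity argument applied to $u_2(x,\alpha,k)$ and $u_2(x,-\beta,k)$ for the single potential $q_2$, to rewrite the second contribution as $4\pi A_2(\beta,\alpha,k)$. Combining the two pieces, the surface integral tends to $-4\pi[A_1(\beta,\alpha,k)-A_2(\beta,\alpha,k)]$, and equating this with the volume integral gives \eqref{e14}. The delicate points are the uniformity of the $o(1/R)$ remainders over $S^2$, needed so that multiplication by the factor $ikR$ still produces $o(1)$, and the careful bookkeeping of the two stationary points; controlling the remainder is precisely the role of the technical estimate deferred to the Appendix.
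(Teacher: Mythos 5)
Your proof is correct, but it follows a genuinely different route from the paper's. The paper (Appendix, part 2) never integrates the Lagrange identity for the two scattering solutions; instead it applies Green's formula to the two \emph{resolvent kernels} $G_j(x,y,k)$, obtaining the resolvent identity \eqref{A13}, and then uses the kernel asymptotics $G_j(x,y,k)=\frac{e^{ik|y|}}{4\pi|y|}u_j\bigl(x,-\tfrac{y}{|y|},k\bigr)+o\bigl(\tfrac1{|y|}\bigr)$ (formula \eqref{A14}, cited from \cite{R190}) twice --- first letting $|y|\to\infty$ to replace $G_2(\cdot,y,k)$ by a scattering solution, then letting $|x|\to\infty$ to produce the amplitudes --- arriving at \eqref{A16}; reciprocity $A(\beta,\alpha,k)=A(-\alpha,-\beta,k)$ then gives \eqref{e14}. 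Your route works directly with $u_1(x,\alpha,k)$ and $u_2(x,-\beta,k)$, pushes the boundary term to infinity, and evaluates it by stationary phase on $S^2$; the bookkeeping you sketch (the factors $(\alpha\cdot\theta)-1$ and $1+\beta\cdot\theta$ killing one stationary point each, yielding $4\pi A_2(-\alpha,-\beta,k)$ and $-4\pi A_1(\beta,\alpha,k)$) is accurate, and both routes end by invoking reciprocity --- your observation that reciprocity follows from your own argument with $q_1=q_2$ makes your version self-contained, whereas the paper simply calls it well known. What each approach buys: the paper's proof hides all asymptotic analysis inside the cited formula \eqref{A14}, so it is short and requires no oscillatory-integral estimates; yours avoids importing that kernel asymptotics but must control the $o(1/R)$ remainders uniformly on the sphere (standard, since $A_j(\cdot,\alpha,k)$ is smooth on $S^2$, so the stationary-phase error is $O(1/(kR)^2)$ and survives multiplication by $kR$). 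Two minor blemishes: the integral representation $A_j=-\frac1{4\pi}\int_{B_a}e^{-ik\beta\cdot x}q_ju_j\,dx$ recorded in your first paragraph is never used afterwards, and your closing claim that the remainder control ``is precisely the role of the technical estimate deferred to the Appendix'' is a misattribution --- the Appendix estimate concerns the norm of $T^2$ for the iteration argument in the main proof, not this lemma.
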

Consider an algebraic variety $\mathcal{M}$ in $\C^3$ defined by the
equation 
\be\label{e15} \mathcal{M}:=\{\theta\cdot\theta=1,\quad
\theta\cdot\theta:=\theta_1^2+\theta_2^2+\theta_3^2,\quad \theta_j\in
\C,\,\, 1\leq j \leq 3.\} \ee This is a non-compact variety, intersecting 
$\R^3$ over the
unit sphere $S^2$.

Let $R_+=[0,\infty).$ The following result is proved in 
\cite{R190}, p.62.
\begin{lem}\label{lem2}
If Assumption A) holds, then the scattering amplitude
$A(\beta,\alpha,k)$ is a restriction to $S^2\times S^2\times R_+$ of
a function $A(\theta',\theta,k)$ on
$\mathcal{M}\times\mathcal{M}\times\C$, analytic on
$\mathcal{M}\times\mathcal{M}$ and meromorphic on $\C$, $\theta'$,
$\theta\in \mathcal{M}$, $k\in \C$.
\end{lem}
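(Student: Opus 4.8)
The plan is to start from the standard integral representation of the scattering amplitude, which follows directly from \lemref{lem1} upon setting $q_2=0$: then $A_2\equiv 0$ and $u_2(x,-\beta,k)=e^{-ik\beta\cdot x}$, so that
\[
A(\beta,\alpha,k)=-\frac{1}{4\pi}\int_{B_a}e^{-ik\beta\cdot y}\,q(y)\,u(y,\alpha,k)\,dy.
\]
This formula separates the three variables cleanly: $\beta$ enters only through the exponential kernel, while the entire $\alpha$- and $k$-dependence is carried by the scattering solution $u$ inside the integral over the bounded set $B_a$. The $\beta$-variable is then immediate: since the integration is over $B_a$ and $q\in H_0^\ell(B_a)$, replacing $\beta$ by a complex vector $\theta'\in\mathcal{M}$ produces an entire function of $\theta'\in\C^3$, because $e^{-ik\theta'\cdot y}$ is entire and the integral converges absolutely and depends holomorphically on $\theta'$. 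Restricting to $\mathcal{M}$ yields analyticity in $\theta'$.

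Next I would treat the $\alpha$-dependence through the integral equation \eqref{e7}, rewritten as $(I+T_k)u=e^{ik\alpha\cdot x}$, where $T_k$ is the integral operator with kernel $g(x,y,k)q(y)$ acting on $L^2(B_a)$ (or $C(B_a)$). The crucial observation is that $T_k$ does not depend on $\alpha$. For real $k>0$ the operator $I+T_k$ is invertible by unique solvability of the scattering problem for real $q$, and the free term $e^{ik\theta\cdot x}$ is entire in $\theta\in\C^3$. Hence $u=(I+T_k)^{-1}e^{ik\theta\cdot x}$ is analytic in $\theta$ on $\mathcal{M}$, since $(I+T_k)^{-1}$ is a fixed bounded operator independent of $\theta$. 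Substituting into the representation above shows $A(\theta',\theta,k)$ is analytic on $\mathcal{M}\times\mathcal{M}$.

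For the $k$-dependence both $T_k$ and the free term vary with $k$. The kernel $g(x,y,k)=e^{ik|x-y|}/(4\pi|x-y|)$ is entire in $k$ and weakly singular, so $T_k$ is an entire family of compact operators on the chosen Banach space, and the free term $e^{ik\theta\cdot x}$ is likewise entire. By the analytic Fredholm (Steinberg) theorem, since $I+T_k$ is invertible for real $k>0$, the resolvent $(I+T_k)^{-1}$ extends to a meromorphic operator-valued function on all of $\C$ with finite-rank residues. Therefore $u(x,\theta,k)$, and hence $A(\theta',\theta,k)$, are meromorphic in $k$, with poles confined to the discrete set $\{ik_j\}$ described in the Introduction.

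The main obstacle is the careful setup of the analytic-Fredholm machinery: verifying that $T_k$ is compact and jointly analytic in its parameters on the chosen function space, that the inverse obtained at real parameter values persists and stays analytic on a neighborhood in the \emph{non-compact} variety $\mathcal{M}$, and that the estimates are uniform on compact subsets of $\mathcal{M}$ so the local pieces patch into a single global analytic object. The smoothness assumption $q\in H_0^\ell(B_a)$, $\ell>3$, is precisely what supplies the regularity and compactness needed for these steps.
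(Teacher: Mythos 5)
The paper offers no proof of \lemref{lem2} to compare with: the result is simply quoted from \cite{R190}, p.~62. Judged on its own merits, your outline is correct and is the standard argument (very likely the same in substance as the one in the cited monograph). The representation
\begin{equation*}
A(\beta,\alpha,k)=-\frac{1}{4\pi}\int_{B_a}e^{-ik\beta\cdot y}\,q(y)\,u(y,\alpha,k)\,dy
\end{equation*}
does follow from \lemref{lem1} with $q_2=0$; the exponential $e^{-ik\theta'\cdot y}$ is entire in $\theta'\in\C^3$ and the integration is over the bounded ball, giving analyticity in $\theta'$; writing the integral equation \eqref{e7} as $(I+T_k)u=e^{ik\theta\cdot x}$, with $T_k$ independent of $\theta$ and compact on $C(B_a)$ (here $q\in H_0^\ell(B_a)$, $\ell>3$, is bounded and the kernel is weakly singular), gives analyticity of $u$, hence of $A$, in $\theta$; and the analytic Fredholm theorem applied to the entire compact family $k\mapsto T_k$, invertible at real $k>0$ by unique solvability of \eqref{e7}, gives meromorphy of $(I+T_k)^{-1}$, hence of $A$, in $k\in\C$. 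Two corrections, neither fatal. First, your closing claim that the $k$-poles are confined to the set $\{ik_j\}$ is wrong on all of $\C$: by the result of \cite{R35} quoted in the Introduction, that description (finitely many simple poles at $ik_j$) applies only in the half-plane $\mathrm{Im}\,k\geq 0$, while in the lower half-plane the meromorphic continuation may have infinitely many poles (resonances); the lemma asserts only meromorphy, so nothing is lost. Second, the difficulty you flag at the end --- uniform estimates on compact subsets of the non-compact variety $\mathcal{M}$ and ``patching'' of local pieces --- is not actually an obstacle: your formula defines $A(\theta',\theta,k)$ globally on $\C^3\times\C^3$ (away from the $k$-poles), holomorphy is a local property, and analyticity of the restriction to $\mathcal{M}\times\mathcal{M}$ is then automatic.
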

The scattering solution $u(x,\alpha,k)$ is a meromorphic function of
$k$ in $\C$, analytic in Im$k\geq 0$, except, possibly, at the points
$k=ik_j$, $1\leq j\leq n$, $k_j>0$, where $-k_j^2$ are negative 
eigenvalues of the selfadjoint Schr\"odinger operator, defined by the 
potential $q$ in $L^2(\R^3)$. These eigenvalues can be absent, for 
example, if $q\geq 0$.

We need the notion of the Radon transform: 
\be\label{e16}
\hat{f}(\beta,\lambda):=\int_{\beta\cdot x=\lambda}f(x)d\sigma, \ee
where $d\sigma$ is the element of the area of the plane $\beta\cdot
x=\lambda$, $\beta\in S^2$, $\lambda$ is a real number. 
The following properties of the Radon transfor will be used:
\be\label{e17}
\int_{B_a}f(x)dx=\int_{-a}^a\hat{f}(\beta,\lambda)d\lambda, \ee
\be\label{e18} \int_{B_a}e^{ik\beta\cdot x}f(x)dx=\int_{-a}^a
e^{ik\lambda}\hat{f}(\beta,\lambda) d\lambda, \ee \be\label{e19}
\hat{f}(\beta,\lambda)=\hat{f}(-\beta,-\lambda). \ee 
These properties are proved, e.g., in \cite{R348}, pp. 12, 15.

We also need the following  Phragmen-Lindel\"{o}f lemma, which is proved 
in  \cite{Le}, p.69,  and  in \cite{PS}.
\begin{lem}\label{lem3}
Let $f(z)$ be holomorphic inside an angle $\mathcal{A}$ of opening
$<\pi$;  $|f(z)|\leq c_1e^{c_2|z|}, \quad z\in \mathcal{A},$
$c_1,c_2>0$ are constants; $|f(z)|\leq
M$ on the boundary of $\mathcal{A}$; and $f$ is continuous up to the 
boundary
of $\mathcal{A}$. Then $|f(z)|\leq M,\quad \forall z\in \mathcal{A}.$
\end{lem}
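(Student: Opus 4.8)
The plan is to reduce the statement to a standard comparison-function argument built on the ordinary maximum modulus principle. First I would normalize the geometry: by rotating the plane I may assume the angle is symmetric about the positive real axis, so that $\mathcal{A}=\{z=re^{i\phi}:\ r>0,\ |\phi|<\beta\}$ with $2\beta<\pi$, hence $\beta<\pi/2$. Since the opening $2\beta$ is strictly less than $\pi$, the number $\pi/(2\beta)$ exceeds $1$, so I can fix an exponent $\rho$ with $1<\rho<\pi/(2\beta)$. This choice is the crux of the whole argument, and it is available precisely because the opening is $<\pi$.

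Next I would introduce, for each $\ep>0$, the auxiliary function $g_\ep(z):=f(z)\exp(-\ep z^\rho)$, where $z^\rho$ is the principal branch, which is holomorphic on $\mathcal{A}$. For $z=re^{i\phi}$ one has $\mathrm{Re}(z^\rho)=r^\rho\cos(\rho\phi)$, and since $|\rho\phi|\le\rho\beta<\pi/2$ throughout $\mathcal{A}$ we obtain $\cos(\rho\phi)\ge\dl:=\cos(\rho\beta)>0$. Consequently $|\exp(-\ep z^\rho)|=\exp(-\ep r^\rho\cos(\rho\phi))\le 1$ everywhere in $\mathcal{A}$, so on the two bounding rays the hypothesis $|f|\le M$ gives $|g_\ep(z)|\le M$. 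Moreover the growth hypothesis yields $|g_\ep(z)|\le c_1\exp(c_2 r-\ep\dl r^\rho)$, and because $\rho>1$ the exponent tends to $-\infty$ as $r\to\infty$; hence $g_\ep(z)\to 0$ uniformly as $|z|\to\infty$ in $\mathcal{A}$.

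With these two facts in hand I would apply the maximum modulus principle on a truncated sector. Given $\ep>0$, I choose $R$ so large that $|g_\ep(z)|\le M$ on the circular arc $\{|z|=R\}\cap\mathcal{A}$, which is possible by the uniform decay just established. Then $|g_\ep|\le M$ on the entire boundary of the bounded region $\{z\in\mathcal{A}:\ |z|\le R\}$, and the maximum principle gives $|g_\ep(z)|\le M$ inside it. Fixing an arbitrary $z_0\in\mathcal{A}$ and letting $R\to\infty$ yields $|g_\ep(z_0)|\le M$, that is, $|f(z_0)|\le M\exp(\ep\,\mathrm{Re}(z_0^\rho))$. Finally I would let $\ep\to 0^+$ to conclude $|f(z_0)|\le M$, which is the desired bound since $z_0$ was arbitrary.

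The only genuine obstacle is the existence of the exponent $\rho$. One needs $\rho>1$ so that the superlinear decay $\exp(-\ep\dl r^\rho)$ overcomes the permitted exponential growth $e^{c_2|z|}$ at infinity, while simultaneously requiring $\rho\beta<\pi/2$ so that $\mathrm{Re}(z^\rho)$ remains positive and the factor $\exp(-\ep z^\rho)$ does not spoil the bound $M$ on the boundary. These two demands are compatible exactly when the opening is less than $\pi$; were the opening equal to or greater than $\pi$, no admissible $\rho$ would exist, which is why the hypothesis on the opening is indispensable.
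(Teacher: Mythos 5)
Your proof is correct and complete. The paper does not prove this lemma itself --- it only cites \cite{Le}, p.~69, and \cite{PS} --- and your argument is precisely the classical Phragm\'en--Lindel\"of comparison-function proof found in those references: the auxiliary factor $\exp(-\epsilon z^{\rho})$ with $1<\rho<\pi/(2\beta)$ (possible exactly because the opening is $<\pi$), the maximum modulus principle on truncated sectors, and the passages $R\to\infty$ and $\epsilon\to 0^{+}$.
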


\section{Proof of \thmref{thm1}}
The scattering data in Remark 1
determine uniquely the scattering data in \thmref{thm1} by
\lemref{lem2}.

{\it Let us outline the ideas of the proof of \thmref{thm1}.}

Assume that potentials
$q_j$, $j=1,2$, generate the same scattering data:
$$A_1(\beta, \alpha_0,k)=A_2(\beta, \alpha_0,k)\qquad \forall \beta\in 
S^2,
\quad \forall k>0,$$ and let
$$p(x):=q_1(x)-q_2(x).$$
Then by \lemref{lem1},
see equation \eqref{e14}, one gets \be\label{e20}
0=\int_{B_a}p(x)u_1(x,\alpha_0,k)u_2(x,-\beta,k)dx, \qquad \forall 
\beta\in
S^2,\ \forall k>0.\ee 
By \eqref{e9} and \eqref{e11} one can rewrite \eqref{e20} as 
\be\label{e21} \int_{B_a}e^{i\kappa \zeta\cdot
x}[1+\epsilon(x,k)]p(x)dx=0\quad \forall \zeta\in S^2_+,\ \forall k>0,
\ee where \bee
\epsilon(x,k):=\epsilon:=\epsilon_1(x,k)+\epsilon_2(x,k)+
\epsilon_1(x,k)\epsilon_2(x,k),
\eee 
and we have denoted $|\alpha_0-\beta|:=\tau,$ 
$\zeta:=(\alpha_0-\beta)/\tau$, $\kappa:=\tau k$.
Without loss of generality one may assume that $\alpha_0$ is the unit
vector along $x_3-$ axis. Then $\tau$ runs through $[0,2]$ and 
the unit vector $\zeta$
 runs through  $S^2_+$, the upper half  of the unit sphere $S^2$. 
Since $k\in [0, \infty)$ is arbitrary in \eqref{e21}, so is $\kappa=\tau k$.
Because the left-hand side of \eqref{e21} depends on $\zeta$ analytically
on the variety $\mathcal{M}$, one concludes that 
relation \eqref{e21} holds for any $\zeta\in S^2$ if it holds for
$\zeta\in S^2_+$. So, from now on we will use formula   \eqref{e21}
with $\zeta \in S^2$ being arbitrary. 

By \lemref{lem2} the relations \eqref{e20} and \eqref{e21} hold for 
complex $k$, 
\be\label{e22} \tau k=\kappa+i\eta,
\quad \eta\geq 0. 
\ee
Using formulas \eqref{e3}-\eqref{e4}, one derives
from \eqref{e21} the relation 
\be\label{e23}
\tilde{p}((\kappa+i\eta)\zeta)+\frac{1}{(2\pi)^3}
(\tilde{\epsilon}*\tilde{p})((\kappa+i\eta)\zeta)=0
\qquad \forall \zeta\in S^2,\, \forall \kappa\in \R, \ee
where the notation $(f*g)(z)$ means that the convolution $f*g$ is 
calculated at the argument $z=(\kappa+i\eta)\zeta$.
  
One has 
\be\label{e24} \sup_{\zeta\in
S^2}|\tilde{\epsilon}*\tilde{p}|:= \sup_{\zeta\in
S^2}|\int_{\R^3}\tilde{\epsilon}((\kappa+i\eta)\zeta-s)\tilde{p}(s)ds|\leq
\nu(\kappa, \eta)\sup_{s\in \R^3}|\tilde{p}(s)|, \ee 
where 
$$\nu(\kappa, \eta):=\sup_{\zeta\in S^2}\int_{\R^3}|\tilde{\epsilon}
((\kappa+i\eta)\zeta-s)|ds.$$
We  prove that if  $\eta=\eta(\kappa)=O(\ln \kappa)$
is suitably chosen, namely as in \eqref{e29} below,  
then the following inequality holds:
\be\label{e25}
0<\nu(\kappa,\eta(\kappa))<1, \qquad \kappa\to \infty. \ee 
We also prove that
\be\label{e26} \sup_{\zeta\in
S^2}|\tilde{p}((\kappa+i\eta(\kappa))\zeta)|\geq \sup_{s\in
\R^3}|\tilde{p}(s)|,\quad \kappa\to \infty, \ee 
and then it follows from
\eqref{e23}-\eqref{e26} that $\tilde{p}(s)=0$, so $p(x)=0$, and
\thmref{thm1} is proved. 
Indeed, it follows from \eqref{e23} and \eqref{e26} that,
for sufficiently large $\kappa$ and a suitable $\eta(k)=O(\ln k)$, 
one has 
$$ \sup_{s\in \R^3}|\tilde{p}(s)|\leq \frac 1 {(2\pi)^3} 
\nu(\kappa,\eta(\kappa))\sup_{s\in \R^3}|\tilde{p}(s)|.$$
If \eqref{e25} holds, then the above equation implies that $\tilde{p}=0$.
This and the injectivity of the Fourier transform imply that $p=0$. 

{\it This completes the outline of the proof of \thmref{thm1}}.

Let us now give a detailed proof of estimates \eqref{e25} 
and \eqref{e26}, that completes the proof of Theorem 1.1. 
We denote $\zeta$ by $\beta$ in what follows, since both
unit vectors run through all of $S^2$.
 
We assume that $p(x)\not\equiv 0$, because
otherwise there is nothing to prove. Let
$$\max_{s\in\R^3}|\tilde{p}(s)|:=\mathcal{P}\neq 0.$$

\begin{lem}\label{lem4}
If Assumption A) holds and $\mathcal{P}\neq 0$, then
\be\label{e27} \limsup_{\eta\to \infty}
\max_{\beta\in S^2}|\tilde{p}((\kappa+i\eta)\beta)|=\infty, \ee
where $\kappa>0$ is arbitrary but fixed.
For any $\kappa>0$ there is an $\eta=\eta(\kappa)$, such that
\be\label{e28}
\max_{\beta\in S^2}|\tilde{p}((\kappa+i\eta(\kappa))\beta)|=
\mathcal{P}, 
\ee
where the number $\mathcal{P}:=\max_{s\in\R^3}|\tilde{p}(s)|$, 
and
\be\label{e29} \eta(\kappa)=a^{-1}\ln \kappa +O(1)  \text{\quad as 
\quad }\kappa\to
+\infty. \ee
\end{lem}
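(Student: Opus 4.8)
The plan is to reduce all three assertions to the study of the single real-variable function
\[
\Phi(\eta):=\max_{\beta\in S^2}\bigl|\tilde p((\kappa+i\eta)\beta)\bigr|,\qquad \kappa>0\ \text{fixed},
\]
and to locate the level $\Phi(\eta)=\mathcal P$ by a continuity argument, extracting the rate \eqref{e29} afterwards from sharp exponential-type bounds. The starting point is the Paley--Wiener representation obtained by inserting $\xi=(\kappa+i\eta)\beta$ into $\tilde p(\xi)=\int_{B_a}p(x)e^{i\xi\cdot x}\,dx$ and rewriting the result by the Radon transform \eqref{e16}, \eqref{e18} (the latter extending to complex frequency by analytic continuation):
\[
\tilde p((\kappa+i\eta)\beta)=\int_{-a}^{a}e^{(i\kappa-\eta)\lambda}\,\hat p(\beta,\lambda)\,d\lambda .
\]
Since $p$ is supported in $B_a$ the integrand lives on $\lambda\in[-a,a]$, so $|e^{(i\kappa-\eta)\lambda}|=e^{-\eta\lambda}\le e^{a\eta}$ and therefore $|\tilde p((\kappa+i\eta)\beta)|\le \|\hat p(\beta,\cdot)\|_{L^1}e^{a\eta}\le Ce^{a\eta}$ uniformly in $\beta$; this is the basic upper bound. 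Throughout I take $a$ to be the exact support radius of $p$, so that the support of $p$ reaches the sphere $|x|=a$.

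To prove \eqref{e27} I would exhibit a single direction along which the growth is genuine. Since $p\not\equiv0$, the Radon transform $\hat p(\beta,\cdot)$ is nonzero for almost every $\beta$, and after replacing $\beta$ by $-\beta$ if necessary its support extends to some $\lambda<0$; the factor $e^{-\eta\lambda}$ then concentrates the mass of the integral near the left endpoint of that support as $\eta\to\infty$, and a Watson-type (Laplace) analysis shows that $|\tilde p((\kappa+i\eta)\beta)|$ grows like a positive power of $e^{\eta}$ times an algebraic factor in $\eta$, hence tends to $\infty$. As $\Phi(\eta)$ dominates this single-direction quantity, $\limsup_{\eta\to\infty}\Phi(\eta)=\infty$, which is \eqref{e27}. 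Next, $\Phi$ is continuous in $\eta$, being the maximum over the compact sphere $S^2$ of the jointly continuous map $(\beta,\eta)\mapsto|\tilde p((\kappa+i\eta)\beta)|$, and $\Phi(0)=\max_\beta|\tilde p(\kappa\beta)|\le\mathcal P$ because $\kappa\beta\in\R^3$. Combining $\Phi(0)\le\mathcal P$ with $\limsup_{\eta\to\infty}\Phi(\eta)=\infty$ and invoking the intermediate value theorem produces an $\eta(\kappa)\ge0$ with $\Phi(\eta(\kappa))=\mathcal P$, which is \eqref{e28}.

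It remains to pin down the rate \eqref{e29}, which I expect to be the main obstacle. Here the maximum over $\beta$ is decisive: it selects the direction of maximal support extent, which equals $a$, so that the sharp exponential rate of $\Phi$ is exactly $a$. The idea is to sandwich $\Phi(\eta)$ between two quantities of the form $c_\pm\,e^{a\eta}\,\omega_\pm(\kappa,\eta)$, where $\omega_\pm$ are the algebraically small correction factors arising, on the one hand, from the endpoint (Watson) asymptotics near $\lambda=-a$ and, on the other, from the decay in $\kappa$ of the oscillatory integral $\int e^{i\kappa\lambda}\hat p(\beta,\lambda)\,d\lambda$. Imposing $\Phi(\eta(\kappa))=\mathcal P$ and taking logarithms then forces $a\,\eta(\kappa)=\ln\kappa+O(1)$, which is \eqref{e29}; the precise power of $\kappa$ entering $\omega_\pm$, and hence the fact that only the $O(1)$ term is lost, is exactly the quantitative endpoint estimate relegated to the Appendix.

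The delicate requirements in the last step are that all bounds be \emph{uniform} in $\beta\in S^2$ and that the upper and lower correction factors genuinely match. To obtain the uniform control of $\Phi$ and to propagate the boundedness of $\tilde p$ from the real axis $\{\eta=0\}$ into the region $\eta>0$, I would use that $w\mapsto e^{iaw}\tilde p(w\beta)=\int_{0}^{2a}e^{iw\mu}\hat p(\beta,\mu-a)\,d\mu$ is bounded in the half-plane $\operatorname{Im}w\ge0$ (a Paley--Wiener fact, since $|e^{iw\mu}|\le1$ there for $\mu\ge0$), and then apply the Phragm\'en--Lindel\"of lemma \lemref{lem3} in sub-sectors of opening $<\pi$ to convert boundary bounds into interior bounds with no loss in the exponential rate. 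The only genuinely technical point is the two-sided endpoint asymptotics of $\hat p(\beta,\lambda)$ near $\lambda=-a$, uniformly over the optimal directions, and this is precisely what the Appendix estimate supplies.
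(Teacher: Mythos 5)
Your reduction to the single function $\Phi(\eta):=\max_{\beta\in S^2}|\tilde p((\kappa+i\eta)\beta)|$, the upper bound $\Phi(\eta)\le ce^{a\eta}$ via the Radon transform, and the continuity/intermediate-value argument deriving \eqref{e28} from \eqref{e27} and $\Phi(0)\le\mathcal{P}$ all match the paper. The gaps are in your proofs of \eqref{e27} and \eqref{e29}. For \eqref{e27} you rely on a Watson/Laplace endpoint analysis: the factor $e^{-\eta\lambda}$ ``concentrates the mass near the left endpoint'' and $|\tilde p|$ ``grows like a positive power of $e^{\eta}$ times an algebraic factor''. Under Assumption A) this is unjustified and in general false: $p$ is only assumed to lie in $H_0^\ell(B_a)$, so $\hat p(\beta,\cdot)$ may vanish to \emph{infinite order} at the endpoint of its support (e.g.\ $p(x)=\exp\bigl(-1/(a^2-|x|^2)\bigr)$ for $|x|<a$, extended by zero), in which case no expansion of the form ``$e^{a\eta}\times$ algebraic factor'' exists; moreover, because of the oscillatory factor $e^{i\kappa\lambda}$ and possible sign changes of $\hat p$, a real-variable concentration argument cannot exclude cancellation. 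This is exactly why the paper proves \eqref{e27} by \emph{contradiction} using complex analysis: if $h(z,\beta)=\int_{-a}^{a}e^{iz\lambda}\hat p(\beta,\lambda)\,d\lambda$ were bounded on a vertical ray, then, since $|h|\le ce^{a|\eta|}$ everywhere and $|h|\le c$ on the real axis, Phragm\'en--Lindel\"of (\lemref{lem3}) bounds $h$ on each of the four quadrants (the lower half-plane is reached through the symmetry \eqref{e31}, which uses $\hat p(\beta,\lambda)=\hat p(-\beta,-\lambda)$ and the realness of $p$); Liouville then makes $h$ constant in $z$, and Riemann--Lebesgue makes that constant zero, i.e.\ $p\equiv 0$, a contradiction. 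In your write-up Phragm\'en--Lindel\"of appears only as a device for propagating \emph{upper} bounds, which is the easy direction; the growth statement \eqref{e27} is the hard one, and it needs this complex-analytic argument (or an equivalent Paley--Wiener indicator argument), not Watson's lemma.

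The second gap concerns \eqref{e29}. Your plan is to sandwich $\Phi(\eta)$ between $c_{\pm}e^{a\eta}\omega_{\pm}(\kappa,\eta)$ with matching algebraic correction factors, i.e.\ you need a \emph{lower} bound on $\Phi$ of exponential rate exactly $a$ with only algebraic loss, uniform over the maximizing directions. No such bound follows from Assumption A) --- it fails for the same infinite-order-vanishing reason --- and you assert that it is ``precisely what the Appendix supplies'', which is not the case: the Appendix contains only the estimate of $\|T^2\|$ (needed to solve \eqref{e12} by iterations) and the proof of Lemma 2.1; there is no endpoint asymptotics of $\hat p$ anywhere in the paper. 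The paper's own route to \eqref{e29} uses only one-sided information: Lemma 3.2 (\lemref{lem5}), proved by integration by parts using $p\in H_0^\ell(B_a)$, gives the upper bound \eqref{e39}; since $\Phi(\eta(\kappa))=\mathcal{P}\neq 0$ is a fixed constant, the exponential $e^{a\eta(\kappa)}$ must compensate the polynomial decay $(1+\kappa^2+\eta^2)^{-\ell/2}$, which forces $\eta(\kappa)$ to be of order $\ln\kappa$, normalized in \eqref{e29} so that $e^{a\eta}(\kappa^2+\eta^2)^{-1/2}$ stays bounded --- the property actually used later in the proof of \thmref{thm1}. (Whether this argument really pins the constant down to $a^{-1}$ rather than $\ell/a$ is itself a delicate point of the paper, but in any case it does not require the two-sided sandwich.) So the ingredient on which your derivation of \eqref{e29} hinges is both unavailable in this generality and different from what the paper does.
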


{\it Proof of \lemref{lem4}}.    By formula \eqref{e18} one
gets \be\label{e30}
\tilde{p}((\kappa+i\eta)\beta)=\int_{B_a}p(x)e^{i(\kappa+i\eta)\beta\cdot
x}dx=\int_{-a}^ae^{i\kappa \lambda
-\eta\lambda}\hat{p}(\beta,\lambda)d\lambda. \ee 
The function
$\hat{p}(\beta,\lambda)$ is compactly supported, real-valued, and 
satisfies relation \eqref{e19}. Therefore
\be\label{e31} \max_{\beta\in
S^2}|\tilde{p}((\kappa+i\eta(\kappa))\beta)|=\max_{\beta\in
S^2}|\tilde{p}((\kappa-i\eta(\kappa))\beta)|. \ee 
Indeed,
\be\begin{split}\label{e32} \max_{\beta\in
S^2}|\tilde{p}((\kappa+i\eta(\kappa))\beta)|&=\max_{\beta\in
S^2}\left|\int_{-a}^ae^{i\kappa\lambda-\eta\lambda}\hat{p}(\beta,\lambda)
d\lambda\right|\\
&=\max_{\beta\in
S^2}\left|\int_{-a}^ae^{-i\kappa \mu+\eta\mu}\hat{p}(\beta,-\mu)d\mu\right|\\
&=\max_{\beta'\in S^2}\left| \int_{-a}^a e^{-i\kappa
\mu+\eta\mu}\hat{p}(-\beta',-\mu)d\mu
\right|\\
&=\max_{\beta'\in S^2}\left| \int_{-a}^a e^{-i\kappa
\mu+\eta\mu}\hat{p}(\beta',\mu)d\mu
\right|\\
&=\max_{\beta\in S^2}|\tilde{p}((\kappa-i\eta)\beta)|.\\
\end{split}\ee 
At the last step we took into account that
$\hat{p}(\beta,\lambda)$ is a real-valued function, so
 \begin{equation}\begin{split}\max_{\beta\in S^2}\left| \int_{-a}^a 
e^{-i\kappa \mu+\eta\mu}\hat{p}(\beta,\mu)d\mu
\right|&=\max_{\beta\in S^2}\left| \int_{-a}^a 
e^{i\kappa
\mu+\eta\mu}\hat{p}(\beta,\mu)d\mu\right|\\
&=
\max_{\beta\in S^2}|\tilde{p}((\kappa-i\eta)\beta)|.
\end{split}\end{equation}
If $p(x)\not\equiv 0$, then \eqref{e30} and \eqref{e31}
imply \eqref{e27}, as follows from \lemref{lem3}. Let us give a detailed 
proof of this statement.

Consider
the function $h$ of the complex variable  $z:=\kappa+i\eta:$ 
\be\label{e33}
h:=h(z,\beta):=\int_{-a}^ae^{iz\lambda}\hat{p}(\beta,\lambda)d\lambda.\ee
If \eqref{e27} is false, then \be\label{e34} |h(z,\beta)|\leq c\quad
\forall z=\kappa+i\eta,\quad \eta\geq 0,\quad \forall \beta\in S^2,
\ee where $\kappa\geq 0$ is an arbitrary  fixed number and the constant 
$c>0$ does not depend on $\beta$ and $\eta$. 

Thus, $|h|$ is bounded on the ray $\{\kappa=0, \eta\geq 0\}$, which is 
part of the boundary of the right angle $\mathcal{A}$, and the other
part of its boundary is the ray  $\{\kappa\geq 0, \eta= 0\}$. Let us check 
that $|h|$ is bounded on this ray also.

One has
\be\label{e35} |h(\kappa,\beta)|=
|\int_{-a}^ae^{i\kappa \lambda}\hat{p}(\beta,\lambda)d\lambda|\leq
\int_{-a}^a|\hat{p}(\beta,\lambda)|d\lambda\leq c, \ee 
where $c$ stands in this paper for {\it various} constants. 
From \eqref{e34}-\eqref{e35} it
follows that on the boundary of the right angle $\mathcal{A}$, namely, 
on the two
rays $\{\kappa\geq 0, \eta=0\}$ and $\{\kappa=0, \eta\geq 0\}$ the
entire function $h(z,\beta)$ of the complex variable $z$ is bounded, 
$|h(z,\beta)|\leq c$,
and inside $\mathcal{A}$ this function  satisfies 
the estimate
\be\label{e36} |h(z,\beta)|\leq
e^{a|\eta|}\int_{-a}^a|\hat{p}(\beta,\lambda)|d\lambda\leq
ce^{a|\eta|}, \ee
where $c$ does not depend on $\beta$.
Therefore, by \lemref{lem3}, $|h(z,\beta)|\leq c$ in the
whole angle $\mathcal{A}$. 

By \eqref{e31} the same argument is applicable to the remaining three
right angles, the union of which is the whole complex $z-$plane $\C$. 
Therefore 
\be\label{e37} \sup_{z\in
\C,\beta\in S^2}|h(z,\beta)|\leq c. \ee 
This implies by the Liouville theorem  that
$h(z,\beta)=c \,\, \forall z\in \C.$ 

Since $\hat{p}(\beta,\lambda)\in L^1(-a,a)$, the
relation 
\be\label{e38}
\int_{-a}^ae^{iz\lambda}\hat{p}(\beta,\lambda)d\lambda=c\quad \forall
z\in \C, 
\ee 
and the Riemann-Lebesgue lemma imply that $c=0$, so 
$\hat{p}(\beta,\lambda)=0$ $\forall \beta\in S^2$ and $\forall \lambda\in 
\R$. Therefore
$p(x)=0$, contrary to our assumption. Consequently,  relation
\eqref{e27} is proved. 

Relation \eqref{e28} follows from \eqref{e27} because for large
$\eta$ the left-hand side of \eqref{e28} is larger than $\mathcal{P}$
due to \eqref{e27}, while for $\eta=0$ the left-hand side of \eqref{e28} 
is not larger than  $\mathcal{P}$ by the definition of the Fourier 
transform.  

Let us derive estimate \eqref{e29}.

From the assumption $p(x)\in H^{\ell}_0(B_a)$ it
follows that 
\be\label{e39} |\tilde{p}((\kappa+i\eta)\beta)|\leq
c\frac{e^{a|\eta|}}{(1+\kappa^2+\eta^2)^{\ell/2}}. \ee
This inequality is proved in Lemma 3.2, below.

The right-hand side of this inequality is of the order $O(1)$ as
$\kappa\to
\infty$ if $|\eta|=a^{-1}\ln \kappa+O(1)$ as $\kappa \to 
\infty$. This proves  
relation \eqref{e29} and  we specify $O(\ln \kappa)$ as in this 
relation.

Let us now prove inequality  \eqref{e39}. 
\begin{lem}\label{lem5}
If $p\in H_0^\ell(B_a)$ then estimate \eqref{e39} holds.
\end{lem}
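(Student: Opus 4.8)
The plan is to derive \eqref{e39} directly from the definition of the Fourier transform \eqref{e3}, evaluated at the complex point $\xi=(\kappa+i\eta)\beta$, by separating the exponential growth in $\eta$ from the polynomial decay in $|z|$, where $z:=\kappa+i\eta$. Writing
\[
\tilde p((\kappa+i\eta)\beta)=\int_{B_a}p(x)e^{i\kappa\beta\cdot x}e^{-\eta\beta\cdot x}dx,
\]
I would first note that on the support $B_a$ one has $|\beta\cdot x|\le|x|\le a$, so $|e^{-\eta\beta\cdot x}|\le e^{a|\eta|}$. This yields the crude bound $|\tilde p((\kappa+i\eta)\beta)|\le e^{a|\eta|}\|p\|_{L^1(B_a)}\le c\,e^{a|\eta|}$, which already accounts for the numerator in \eqref{e39}.

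The polynomial decay comes from integration by parts in the direction $\beta$. Since $(\beta\cdot\nabla)^\ell e^{iz\beta\cdot x}=(iz)^\ell e^{iz\beta\cdot x}$, integrating by parts $\ell$ times (the boundary terms vanish because $p\in H_0^\ell(B_a)$ is compactly supported) gives
\[
(iz)^\ell\tilde p(z\beta)=(-1)^\ell\int_{B_a}\big((\beta\cdot\nabla)^\ell p\big)(x)\,e^{iz\beta\cdot x}dx.
\]
Because $p\in H^\ell$, the directional derivative $(\beta\cdot\nabla)^\ell p=\sum_{|\gamma|=\ell}\binom{\ell}{\gamma}\beta^\gamma\partial^\gamma p$ lies in $L^2(B_a)$, with $L^2$-norm bounded by $c\|p\|_{H^\ell(B_a)}$ uniformly in $\beta\in S^2$ since $|\beta^\gamma|\le 1$. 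Estimating the last integral as in the first paragraph and using the Cauchy--Schwarz inequality on the bounded domain $B_a$ produces
\[
|z|^\ell|\tilde p(z\beta)|\le e^{a|\eta|}\|(\beta\cdot\nabla)^\ell p\|_{L^1(B_a)}\le c\,e^{a|\eta|},
\]
that is, $|\tilde p((\kappa+i\eta)\beta)|\le c\,e^{a|\eta|}(\kappa^2+\eta^2)^{-\ell/2}$, where I used $|z|^2=\kappa^2+\eta^2$.

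Finally I would combine the two estimates. Using the elementary inequality $\min\{1,t^{-\ell/2}\}\le 2^{\ell/2}(1+t)^{-\ell/2}$ with $t=\kappa^2+\eta^2$, the crude bound and the decaying bound together give
\[
|\tilde p((\kappa+i\eta)\beta)|\le c\,e^{a|\eta|}(1+\kappa^2+\eta^2)^{-\ell/2},
\]
which is exactly \eqref{e39}, with $c$ proportional to $\|p\|_{H^\ell(B_a)}$ and independent of $\beta$.

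The only points requiring care are the justification of the integration by parts and the uniformity in $\beta$. Both are handled by the density of $C_0^\infty(B_a)$ in $H_0^\ell(B_a)$: I would first prove the identity and the estimate for $p\in C_0^\infty(B_a)$, where differentiation and integration by parts are classical, and then pass to the limit in the $H^\ell$-norm, the uniformity in $\beta$ being preserved because the multi-index coefficients $\beta^\gamma$ are bounded by $1$. I expect the main (though modest) obstacle to be the bookkeeping of the directional-derivative decomposition so that the constant $c$ is genuinely independent of the direction $\beta$; note that no regularity beyond $\ell$ derivatives in $L^2$ is actually needed for this lemma.
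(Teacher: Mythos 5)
Your proof is correct and follows essentially the same route as the paper's: integration by parts transfers $\ell$ derivatives onto $p$, the compact support in $B_a$ produces the factor $e^{a|\eta|}$, and the $H_0^\ell(B_a)$ regularity bounds the resulting integrals, yielding one power of $(\kappa^2+\eta^2)^{-1/2}$ per derivative. If anything, your version is tidier: using the directional derivative $(\beta\cdot\nabla)^\ell$ avoids the small imprecision in the paper's coordinatewise step (where $\int_{B_a}\partial_j p\, e^{i(\kappa+i\eta)\beta\cdot x}dx$ really carries a factor $|(\kappa+i\eta)\beta_j|$ rather than $(\kappa^2+\eta^2)^{1/2}$), and your $\min\{1,t^{-\ell/2}\}\le 2^{\ell/2}(1+t)^{-\ell/2}$ step makes explicit the passage from $|z|^{-\ell}$ to $(1+\kappa^2+\eta^2)^{-\ell/2}$ near $z=0$, which the paper glosses over.
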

\begin{proof}
Consider $\partial_jp:=\frac{\partial p}{\partial x_j}.$ One has
\be\begin{split}
\left|\int_{B_a}\partial_jpe^{i(\kappa+i\eta)\beta\cdot x}
dx\right|&=\left|-i(\kappa+i\eta)\beta_j\int_{B_a}p(x)
e^{i(\kappa+i\eta)\beta\cdot x} dx\right|\\
&= (\kappa^2+\eta^2)^{1/2}|\tilde{p}((\kappa+i\eta)\beta)|.\\
\end{split}\ee
The left-hand side of the above formula admits the following estimate
$$\left|\int_{B_a}\partial_jpe^{i(\kappa+i\eta)\beta\cdot x}
dx\right|\leq ce^{a|\eta|},$$
where the constant $c>0$ is proportional to $||\partial_jp||_{L^2(B_a)}$. 
Therefore,  
\be\label{e36'} |\tilde{p}((\kappa+i\eta)\beta)|\leq
c[1+(\kappa^2+\eta^2)]^{-1/2}e^{a|\eta|}.
\ee
Repeating this argument
one gets estimate \eqref{e39}. Lemma 3.2 is proved.
\end{proof}
Estimate \eqref{e36'} implies that if relation \eqref{e29} holds
and $\kappa \to \infty$, then  
the quantity
$\sup_{\beta \in S^2}|\tilde{p}((\kappa+i\eta)\beta)|$
remains bounded as $\kappa\to \infty$.

If $\eta$ is fixed and $\kappa \to \infty$, then $\sup_{\beta \in
S^2}|\tilde{p}((\kappa+i\eta)\beta)| \to 0$ by the Riemann-Lebesgue
lemma. This, the continuity of $|\tilde{p}((\kappa+i\eta)\beta)|$
 with respect to $\eta$, and  relation
\eqref{e27},  imply the existence of $\eta=\eta(\kappa)$,
such that  equality \eqref{e28} holds, and, consequently, 
inequality  \eqref{e26} holds. 
This  $\eta(\kappa)$ satisfies
 \eqref{e29} because $\mathcal{P}$ is bounded.

Lemma 3.1 is proved \hfill $\Box$

To complete the
proof of Theorem 1.1
one has to establish estimate \eqref{e25}. This
estimate will be established if one proves the following relation:
\be\label{e40} \lim_{\kappa\to \infty}\nu(\kappa):=\lim_{\kappa\to
\infty}\nu(\kappa,\eta(\kappa)) =0,
\ee 
where $\eta(\kappa)$ satisfies \eqref{e29} and 
\be\label{e41} \nu(\kappa,\eta)=\sup_{\beta\in
S^2}\int_{\R^3}|\tilde{\epsilon}((\kappa+i\eta)\beta-s)|ds. \ee

Our argument is valid for $\epsilon_1$, $\epsilon_2$ and
$\epsilon_1\epsilon_2$, so we will use the letter $\epsilon$ and
equation \eqref{e13} for $\tilde{\epsilon}$. 

Below we denote  $2k:=\kappa+i\eta$ and we choose
$\eta=\eta(\kappa)=a^{-1}\ln \kappa +O(1)$ as $\kappa \to \infty$.

We prove that equation \eqref{e12} can be solved by iterations if 
Im$\eta\geq 0$ and $|k+i\eta|$  is 
sufficiently large,  because for such $k+i\eta$ the operator $T^2$ 
 has small norm in $C(B_a)$, the space
of functions, continuous in the ball $B_a$,  with the $\sup$-norm.
Since equation \eqref{e12}  can be solved by iterations and the norm
of $T^2$ is small, the main term 
in the series, representing its solution, as
$|\kappa+i\eta|\to \infty$, $\eta\geq 0$, is the free term of the equation
\eqref{e12}.  The same is true for the Fourier transform
of equation \eqref{e12}, i.e., for equation \eqref{e13}.
Therefore the main term  of the solution $\tilde{\epsilon}$
to equation \eqref{e13} as $|\kappa+i\eta|\to \infty$, $\eta\geq 0$, is 
obtained by using the estimate of the free term of this equation. 
Thus,
it is sufficient to check estimate \eqref{e40} for the function 
$\nu(\kappa, \eta(\kappa))$ using in place of $\tilde{\epsilon}$ the 
function $\tilde{q}(\xi)(\xi^2-2k\beta\cdot\xi)^{-1}$, with $2k$ replaced 
by $\kappa+i\eta$ and $\eta=a^{-1}\ln \kappa +O(1)$ as $\kappa \to 
\infty$.
  
For the above claim that equation \eqref{e12} has the operator
$$T\epsilon=\int_{B_a}G(x-y,k)q(y)\epsilon(y,\beta, k)dy,$$  with the norm
$||T^2||$ in the space $C(B_a)$, which tends to zero as
$|\kappa+i\eta|\to \infty,$ $\eta\geq 0$,  see  Appendix. 

Thus, let us estimate the modulus of the factor 
$\nu(\kappa, \eta)$ in \eqref{e24} with $\eta=\eta(\kappa)$ 
as in \eqref{e29}. 
Using inequality \eqref{e39}, and denoting $\xi=(\kappa+i\eta)\beta$,
where $\beta\in S^2$ plays the role of $\alpha$ in \eqref{e13}, one 
obtains:
\be\label{e42}\begin{split} I:&=\sup_{\beta\in
S^2}\int_{\R^3}\frac{|\tilde{q}((\kappa+i\eta)\beta-s)|
ds}{|[(\kappa+i\eta)\beta-s)^2-(\kappa+i\eta)\beta 
\cdot ((\kappa+i\eta)\beta-s)]|}\\
&\leq c e^{a|\eta|}\sup_{\beta\in S^2}
\int_{\R^3}\frac{ds}{|s^2-(\kappa+i\eta)\beta\cdot
s|[1+(\kappa\beta-s)^2+\eta^2]^{\ell/2}}\\
&:=ce^{a|\eta|}J.\end{split}\ee

Let us prove that 
$$J=o(\frac 1 \kappa),\,\, \kappa\to \infty.$$ 
If this estimate is proved and $\eta=a^{-1}\ln \kappa +O(1)$, then 
$I=o(1)$ as $ \kappa\to \infty,$ therefore
relation \eqref{e40} follows, and Theorem 1.1 is proved.

Let us write the integral $J$ in the spherical coordinates with
$x_3$-axis directed along vector $\beta$. We have
$$|s|=r, \qquad \beta\cdot s=r\cos\theta :=rt,\quad -1\leq t\leq 1.$$
Denote
$$\gamma:=\kappa^2+\eta^2.$$ 
Then
\be\begin{split}
J&\leq 2\pi\int_0^\infty dr r \int_{-1}^1\frac{dt}{[(r-\kappa
t)^2+\eta^2t^2]^{1/2}(1+\gamma+r^2-2r\kappa t)^{\ell/2}}\\
&:=2\pi \int_0^\infty dr r B(r),
\end{split}\ee
where 
$$B:=B(r)=B(r,\kappa,\eta):=\int_{-1}^1\frac{dt}{[(r-\kappa
t)^2+\eta^2t^2]^{1/2}(1+\gamma+r^2-2r\kappa t)^{\ell/2}}.$$

Estimate of $J$ we start with the observation
$$\tau:=\min_{t\in [-1,1]}[(r-\kappa t)^2+\eta^2
t^2]=\min\{r^2\eta^2/\gamma,\, (r-\kappa)^2+\eta^2 \}.$$
Let
$\tau=r^2\eta^2/\gamma$, which is always the case if $r$ is
sufficiently small. In the case when $\tau=(r-\kappa)^2+\eta^2$ 
the proof is considerably
simpler and is left for the reader. If 
$\tau=r^2\eta^2/\gamma$, then
$$J\leq 2\pi \gamma^{1/2}\eta^{-1}\int_0^\infty dr\int_{-1}^{1}
dt[1+\gamma+r^2-2\kappa rt]^{-\ell/2}.$$

Integrating over $t$ yields
$$J\leq 2\pi \gamma^{1/2}\eta^{-1} [(\ell
-2)\kappa]^{-1}\mathcal{J},$$
where
$$\mathcal{J}:=\int_0^\infty
drr^{-1}[(1+\gamma+r^2-2\kappa r)^{-b}-(1+\gamma+r^2+2\kappa r)^{-b}],$$
and $b:=\ell/2 -1.$ 

Since $\eta=O(\ln \kappa)$, one has $\frac
{\eta}{\kappa}=o(1)$ as $\kappa\to \infty.$ Therefore,
$$\gamma^{1/2}\eta^{-1}\kappa^{-1}=O(\eta^{-1})\qquad as \quad \kappa\to
\infty.$$
Since $\ell>3$, one has $b>\frac 1 2$, and, as we prove below, 
\be\label{eq43}\mathcal{J}=o(\frac 1 \kappa)
\quad as \quad \kappa\to\infty.\ee
This relation implies the desired inequality: 
\be\label{eq44} J\leq o(\frac 1 \kappa)
\qquad as \quad \kappa\to \infty. \ee
Let us derive relation  \eqref{eq43}. One has
$$\mathcal{J}=\int_0^1+\int_1^\infty:=J_1+J_2,$$ $$J_1\leq \int_0^1
drr^{-1}\frac {(w^2+2r\kappa+r^2)^b-(w^2-2r\kappa+r^2)^b}
{(w^2+2r\kappa+r^2)^b(w^2-2r\kappa+r^2)^b},$$ where
$$w^2:=1+\gamma=1+\eta^2+\kappa^2.$$

Furthermore,
$$(w^2+2r\kappa+r^2)^b-(w^2-2r\kappa+r^2)^b\leq \frac {4br\kappa}
{(w^2-2r\kappa+r^2)^{1-b}}.$$
Thus,
$$J_1\leq 4b\kappa\int_0^1dr \frac 1
{(w^2+2r\kappa+r^2)^b(w^2-2r\kappa+r^2)}.$$
This implies the following estimate
$$J_1\leq O(\kappa/w^{2+2b})\leq O(\kappa^{-(1+2b)}),$$
because $w=\kappa [1+o(1)]$ as $\kappa \to \infty$.
Furthermore,
$$J_2\leq\int_1^\infty dr r^{-1}[(1+\eta^2+(r-\kappa)^2)^{-b}-
(1+\eta^2+(r+\kappa)^2)^{-b}]:=J_{21}-J_{22}.$$
One has $J_{22}\leq J_{21}$. 

Let us estimate $J_{21}$.
One obtains 
$$J_{21}=\int_1^{\kappa/2}+\int_{\kappa/2}^\infty:=j_1+j_2,$$ 
and
$$j_1\leq \frac 1 {[W^2+\frac {\kappa^2}4]^b} \ln \kappa=o(\frac 1
\kappa),\qquad W^2:=1+\eta^2,\qquad b>\frac 1 2.$$
Furthermore
$$j_2\leq \frac 2 \kappa \int_{\kappa/2}^\infty \frac
{dr}{[W^2+(r-\kappa)^2]^b}\leq \frac 2 \kappa \int_{-\infty}^\infty
\frac{dy}{[W^2+y^2]^b}=o(\frac 1 \kappa).$$
Thus, if $b>\frac 1 2$, then $J_2=o(\frac 1 \kappa)$ and
$\mathcal{J}=J_1+J_2=o(\frac 1 \kappa)$. Thus, relation \eqref{eq43} is 
proved. 

Relation \eqref{eq43} yields the desired
estimate
$$J=o(\frac 1 \kappa).$$
Thus, both estimates \eqref{eq43} and \eqref{eq44}
are proved. 

Note that the desired relation  $\mathcal{J}=o(\frac 1 \kappa)$
could have been obtained even by replacing $W^2$ by the smaller quantity 
$1$ in the above argument.  

Estimate \eqref{e42} implies 
\be\label{e45} I\leq 
ce^{a|\eta|}o\left(\frac{1}{\sqrt{\kappa^2+\eta^2}}\right),\quad
\kappa\to \infty,\quad \eta=a^{-1}\ln \kappa +O(1). \ee 
The quantity $\eta=\eta(k)=a^{-1}\ln \kappa +O(1)$ was chosen so that 
if  $\kappa \to \infty$, then  the quantity
$\frac{e^{|\eta|a}}{\sqrt{\kappa^2+\eta^2}}$
remains bounded as $\kappa\to \infty$. Therefore esimate \eqref{e45}
implies 
\be\label{e46}
\lim_{\kappa\to \infty,\eta=a^{-1}\ln \kappa+O(1)}I=0. \ee 
Consequently,  estimate \eqref{e40} holds.\\
\thmref{thm1} is proved. \hfill $\Box$

$$ $$
{\bf  APPENDIX}

{\it  1. Estimate of the norm of the operator $T^2$.}

Let \be\label{A1} Tf:=\int_{B_a}G(x-y,\kappa+i\eta)q(y)f(y)dy. \ee
Assume $q\in H_0^\ell(B_a)$, $\ell>2$, $f\in C(B_a)$. Our goal is to
prove that equation \eqref{e12} can be solved by iterations
for all sufficiently large $\kappa$. 

Consider $T$ as an
operator in $C(B_a)$. One has: 
\be\label{A2}\begin{split}
T^2f&=\int_{B_a}dzG(x-z,\kappa+i\eta)q(z)\int_{B_a}G(z-y,\kappa+i\eta)
q(y)f(y)dy\\
&=\int_{B_a}dyf(y)q(y)\int_{B_a}dzq(z)G(x-z,\kappa+i\eta)G(z-y,\kappa+i\eta).
\end{split}\ee
Let us estimate the integral 
\be\label{A3}\begin{split}
I(x,y):&=\int_{B_a}G(x-z,\kappa+i\eta)G(z-y,\kappa+i\eta)q(z)dz\\
&=\int_{B_a}\frac{e^{i(\kappa+i\eta)[|x-z|-\beta\cdot(x-z)+|z-y|-
\beta\cdot(z-y)]}}{16\pi^2|x-z||z-y|}q(z)dz\\
&=\frac{1}{16\pi^2}\int_{B_a}\frac{e^{i(\kappa+i\eta)[|x-z|+|z-y|-
\beta\cdot(x-y)]}}{|x-z||z-y|}q(z)dz\\
&:=\frac{e^{-i(\kappa+i\eta)\beta\cdot(x-y)}}{16\pi^2}I_1(x,y).\\
\end{split}\ee

Let us use the following coordinates (see \cite{R190}, p.391):
\be\label{A4} z_1=\ell st+\frac{x_1+y_1}{2},\quad
z_2=\ell\sqrt{(s^2-1)(1-t^2)}\cos\psi +\frac{x_2+y_2}{2}, \ee
\be\label{A5} z_3=\ell\sqrt{(s^2-1)(1-t^2)}\sin\psi
+\frac{x_3+y_3}{2}. \ee

The Jacobian $J$ of the ransformation $(z_1,z_2,z_3)\to
(\ell,t,\psi)$ is \be\label{A6} J=\ell^3(s^2-t^2),\ee where
\be\label{A7} \ell=\frac{|x-y|}{2},\quad |x-z|+|z-y|=2\ell s,\quad
|x-z|-|z-y|=2\ell t, \ee \be\label{A8}
|x-z||z-y|=4\ell^2(s^2-t^2),\quad 0\leq \psi<2\pi,\quad t\in[-1,1],\
s\in[1,\infty). \ee 
One has 
\be\label{A9} I_1=\ell\int_a^\infty
e^{2i(\kappa+i\eta)\ell s}Q(s)ds, 
\ee 
where 
\be\label{A10}
Q(s):=Q(s,\ell,\frac{x+y}{2})=\int_0^{2\pi}d\psi\int_{-1}^1dt
q(z(s,t,\psi;\ell,\frac{x+y}{2})), 
\ee 
and the function $Q(s)\in H_0^2(\R^3)$ for any fixed $x,y$. 
Therefore, an 
integration by parts in \eqref{A9} yields the following estimate: 
\be\label{A11}
|I_1|=O\left(\frac{1}{|\kappa+i\eta|}\right),\quad |\kappa+i\eta|\to 
\infty.
\ee 
From \eqref{A2}, \eqref{A3} and \eqref{A11} one gets:
\be\label{A12} \|T^2\|=O\left(\frac{1}{\sqrt{\gamma}}\right),\qquad
\gamma:=\kappa^2+\eta^2\to \infty. 
\ee 
Therefore, integral equation
\eqref{e12}, with $k$ replaced by $\frac{\kappa+i\eta}{2}$, can be
solved by iterations if $\gamma$ is sufficiently large and $\eta\geq
0$. Consequently, integral equation \eqref{e13} can be solved by
iterations. Thus, estimate \eqref{e40} holds if such an estimate
holds for the free term in equation \eqref{e13}, that is, for the function
$\frac{\tilde{q}}{\xi^2-(\kappa+i\eta)\beta\cdot \xi}$, namely,  if
estimate \eqref{e46} holds.
$$ $$
{\it 2. Proof of Lemma 2.1.}

Let $L_jG_j:=[\nabla^2 +k^2-q_j(x)]G_j(x,y,k)=-\delta(x-y)$ in $\R^3$, 
$j=1,2$. Applying Green's formula one gets
\be\label{A13} 
G_1(x,y,k)-G_2(x,y,k)=\int_{B_a}[q_2(z)-q_1(z)]G_1(x,z,k)G_2(z,y.k)dz.
\ee
In \cite{R190}, p. 46, the following formula is proved:
\be\label{A14} G_j(x,y,k)=\frac {e^{ik|y|}}{4\pi |y|}u_j(x,\alpha,k) 
+o(\frac 1 {|y|}),\qquad |y|\to \infty, \alpha:= -\frac{y}{|y|},
\ee  
where $u_j(x,\alpha,k)$ is the scattering solution, $j=1,2$.
Applying formula \eqref{A14} to \eqref{A13}, one obtains
\be\label{A15} 
u_1(x,\alpha,k)-u_2(x,\alpha,k)=\int_{B_a}[q_2(z)-q_1(z)]G_1(x,z,k)
u_2(z,\alpha,k)dz
\ee 
using the definition \eqref{e2} of the scattering amplitude $A(\beta, 
\alpha, k)$,
one derives from \eqref{A15} the relation
\be\label{A16}
4\pi [A_1(\beta, \alpha,k)-A_2(\beta, \alpha,k)]=\int_{B_a}[q_2(z)-q_1(z)]
u_1(z,-\beta,k)u_2(z,\alpha,k)dz.
\ee
This formula is equivalent to \eqref{e14} because of the
well-known reciprocity relation $A(\beta, \alpha, k)=A(-\alpha, 
-\beta,k)$.

Lemma 2.1 is proved.\hfill $\Box$
  
\newpage


\begin{thebibliography}{100}


\bibitem{Ka} A. Katchalov, Y. Kurylev, M. Lassas, 
{\it Inverse boundary spectral problems},   CRC, Boca Raton, 2001. 


\bibitem{Le} B. Levin, {\it Distribution of zeros of entire functions},
AMS, Providence, RI, 1980.

\bibitem{PS} G. Polya, G. Szeg\"{o}, {\it Problems and theorems in
analysis}, Springer Verlag, Berlin, 1983, Vol.1, problem III.6.5.322.

\bibitem{R584}  A.G.Ramm, Uniqueness theorem for inverse scattering with 
non-overdetermined data,
J.Phys A, 43, (2010), 112001.


\bibitem{R6}A.G.Ramm, On the analytic continuation of the solution
of the Schr\"odinger equation in the spectral parameter and
 the behavior of the solution to the nonstationary
 problem as $t\to \infty$, {\it Uspechi Mat. Nauk}, 19,
 (1964), 192-194.


\bibitem{R35}A.G.Ramm, Some theorems on analytic continuation of
the Schr\"odinger operator resolvent kernel in the
 spectral parameter, {\it Izvestiya Acad. Nauk Armyan. SSR,
 Mathematics}, 3, (1968), 443-464. 




\bibitem{R228} A.G.Ramm, Recovery of the potential from fixed
energy scattering data, {\it Inverse Problems}, 4, (1988),
 877-886; 5, (1989) 255.

\bibitem{R285} A.G.Ramm, Stability estimates in inverse scattering,
Acta Appl. Math., 28, N1, (1992), 1-42.

\bibitem{R325} A.G.Ramm, Stability of the solution to inverse
obstacle
 scattering problem, J.Inverse and Ill-Posed
 Problems, 2, N3, (1994), 269-275.

\bibitem{R425} A.G.Ramm, Stability of solutions to inverse scattering
problems with fixed-energy data, {\it Milan Journ of Math.}, 70, (2002),
97-161.


\bibitem{R470} A.G.Ramm, {\it Inverse problems}, Springer,
New York, 2005.

\bibitem{R190}  A.G.Ramm, {\it Scattering by obstacles}, D.Reidel,
Dordrecht, 1986.

\bibitem{R486}  A.G.Ramm, {\it Random fields estimation}, World Sci. 
Publ., Singapore 2005 (Chapter 6: Estimation and Scattering Theory)




\bibitem{R589}  A.G.Ramm, Uniqueness of the solution to inverse scattering 
problem with backscattering data, EAMJ (Eurasian Math. Journ.), 1, N3, 
(2010), 82-95.


\bibitem{R348} A. G. Ramm, A. I. Katsevich, {\it The Radon transform and
local tomography}, CRC Press, Boca Raton 1996.

\bibitem{RS} A. G. Ramm, P. Stefanov, Fixed-energy inverse scattering 
for
non-compactly supported potentials, Math.Comp.Modelling, 18, N1, (1993), 57-64.


\bibitem{Sa} Y. Saito, An approximation formula in inverse scattering
problem, {\it J.Math.Phys.}, 27, (1986), 1145-1153.

\end{thebibliography}
\end{document}